\newtheorem{problem}{Problem}
\newtheorem{lemma}{Lemma}
\newtheorem{theorem}{Theorem}
\newtheorem{observation}{Observation}
\newtheorem{corollary}{Corollary}
\newcommand{\ksum}[1]{\ifthenelse{\isempty{#1}}{$k$-{\sf SUM}}{$k$-{\sf SUM(#1)}}}
\newcommand{\kldt}[1]{\ifthenelse{\isempty{#1}}{$k$-{\sf LDT}}{$k$-{\sf LDT(#1)}}}
\newcommand{\gkldt}[1]{\ifthenelse{\isempty{#1}}{{\sf GENERALIZED} $k$-{\sf LDT}}{{\sf GENERALIZED} $k$-{\sf LDT(#1)}}}
\newcommand{\xsum}[2]{\ifthenelse{\isempty{#2}}{#1-{\sf SUM}}{#1-{\sf SUM(#2)}}}
\newcommand{\xldt}[2]{\ifthenelse{\isempty{#2}}{#1-{\sf LDT}}{#1-{\sf LDT(#2)}}}
\newcommand{\threesum}[1]{\xsum{$3$}{#1}}
\newcommand{\threeldt}[1]{\xldt{$3$}{#1}}
\title{Geometric Pattern Matching Reduces to $k$-{\sf SUM}}
\author{Boris Aronov\thanks{Department of Computer Science and Engineering, Tandon School of Engineering, New York
University, Brooklyn, NY 11201, USA; {\tt boris.aronov@nyu.edu}. Partially supported by NSF grant CCF-15-40656 and by grant 2014/170
from the US-Israel Binational Science Foundation. Work by B.A. on this paper has been partially carried out while visiting ULB in November-December 2019, with support from ULB and F.R.S.-FNRS (Fonds National de la Recherche Scientifique).} 
  \and Jean Cardinal\thanks{Universit\'e libre de Bruxelles (ULB), Brussels, Belgium; {\tt jcardin@ulb.ac.be}. Supported by the F.R.S.-FNRS (Fonds National de la Recherche Scientifique) under CDR Grant J.0146.18.}}
\begin{document}
\maketitle
\sloppy

\begin{abstract}
We prove that some exact geometric pattern matching problems reduce in linear time to \ksum{} when the pattern has a fixed size $k$. 
This holds in the real RAM model for searching for a similar copy of a set of $k\geq 3$ points within a set of $n$ points in the plane, and for searching for an affine image of a set of $k\geq d+2$ points within a set of $n$ points in $d$-space.\\

As corollaries, we obtain improved real RAM algorithms and decision trees for the two problems. In particular, they can be solved by algebraic decision trees of near-linear height.
\end{abstract}

\section{Introduction}

The \ksum{} problem is a fixed-parameter version of the \NP-complete {\sf SUBSET SUM} problem.
It consists of deciding, given a set of $n$ numbers, whether any subset of size $k$ sum to zero. The problem for $k=3$, known as \threesum{}, is now
a well-established bottleneck problem in fine-grained complexity theory (see for instance~\cite{AWY18,W18} and references therein).
While there are many reductions showing \threesum{}- or \ksum{}-hardness of computational problems in geometry, only few reductions {\em to}
\threesum{} and \ksum{} are known. We give examples of computational geometry problems that reduce to \threesum{} or \ksum{}.

Our results are motivated by the nontrivial improved upper bounds on the complexity of \threesum{} and \ksum{} proven in the recent years.
While it has long been conjectured that no subquadratic algorithm for \threesum{} existed, it is now known to be solvable in time $O((n^2/\log n)(\log \log n)^2)$ in the real RAM model~\cite{GP18,F17,GS17,C18}. The existence of an $O(n^{2-\delta})$ algorithm for some $\delta>0$ remains an open problem.
Using folklore {\em meet-in-the-middle} algorithms, \ksum{} can be solved in time $O(n^{\lceil k/2\rceil})$ if $k$ is odd, and in time $O(n^{k/2}\log n)$ if $k$ is even.
Recently, Kane, Lovett, and Moran~\cite{KLM19} showed that it can be solved in time $O(n\log^2 n)$ in the linear decision tree model, improving on previous polynomial bounds~\cite{CIO16,ES19}.

\paragraph{Geometric pattern matching.}

We consider two problems involving searching for a given set $P$ of $k$ points, called the {\em pattern}, within a larger set $S$ of points, up to some geometric transformation.
Here we focus on {\em exact} algorithms, in which the pattern must match the subset of points exactly.
We consider the following two problems.

\begin{problem}[{\sf SIMILARITY MATCHING}]
  For a fixed integer $k\geq 3$, given a set $P$ of $k$ points in the plane and a set $S$ of $n$ points in the plane,
  determine whether $S$ contains the image of $P$ under a similarity transformation.
\end{problem}

\begin{problem}[{\sf AFFINE MATCHING}]
For fixed integers $d\geq 2$ and $k\geq d+2$, given a set $P$ of $k$ points in $\mathbb{R}^d$ containing $d+1$ affinely independent points, and
a set $S$ of $n$ points in $\mathbb{R}^d$, determine whether $S$ contains the image of $P$ under an affine transformation.
\end{problem}

A large body of the computational geometry and pattern recognition literature is dedicated to the problems of finding {\em approximate}
matches up to some geometric transformation, where the quality of the approximation is typically measured by the Hausdorff distance~\cite{CGHKKK97,GMO99,GIMV03,AK09}.
For exact pattern matching problems under different families of transformations, known upper bounds on time complexity have been compiled in a survey by Peter Bra\ss~\cite{B02}.
We reproduce them in Table~\ref{tab:ub}.

\begin{table}
  \begin{center}
  \begin{tabular}{|c|c|c|}
    \hline
    Transformations & Dimension & Complexity \\
    \hline
    congruence & 2 & $O(kn^{4/3}\log n)$~\cite{B02} \\
    congruence & 3 & $O(kn^{5/3}\log n 2^{O(\alpha(n)^2)})$~\cite{AS02} \\
    translation & $d$ & $O(kn\log n)$ (easy) \\
    homothety & $d$ & $O(kn^{1+1/d}\log n)$~\cite{EE94,B02} \\
    similarity & $d$ & $O(kn^d\log n)$~\cite{B02}\\
    affine & $d$ & $O(kn^{d+1}\log n)$~\cite{B02} \\
    \hline
  \end{tabular}
  \end{center}
  \caption{\label{tab:ub}Known upper bounds on the time complexity of exact geometric pattern matching in various settings (taken from~\cite{B02} and~\cite{HB17}, Chapter 54). We indicate the dependency on the pattern size $k$.}
  \end{table}

The complexity of these algorithms are directly related to bounds on the maximum number of occurrences of a pattern or a distance in a set of $n$ points.
In fact, such bounds directly yield a lower bound on the computational problem of {\em listing} all occurrences of the pattern.
A prototypal example is Erd\H{o}s' unit distance problem; see Bra\ss\ and Pach~\cite{BP05} for more examples.
It is known, in particular, that there can be $\Theta(n^2)$ similar copies of a pattern in an $n$-point set~\cite{EE94,AF00,AFKK16}.
Structural results on the extremal point sets are also known~\cite{AEF04}.
For affine transformations in $\mathbb{R}^d$, there exist pairs $P,S$ such that $S$ contains $\Theta (n^{d+1})$ copies of $P$: for instance the
$d$-dimensional lattice $\{1,2,\ldots ,n^{1/d}\}^d$ contains $\Theta (n^{d+1})$ affine images of a cube. 

\paragraph{Our results.}

We suppose we can perform exact computations over the reals.
Therefore, all the algorithms that we consider are either uniform algorithms in the real RAM model, or nonuniform algorithms in the algebraic decision tree model.

Our main result is the following.
\begin{theorem}
  \label{thm:main}
  {\sf SIMILARITY MATCHING} and {\sf AFFINE MATCHING} reduce in randomized linear time to \ksum{}.
  \end{theorem}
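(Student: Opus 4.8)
The plan is to reduce each matching problem to the question of whether a \emph{single} linear equation admits a solution whose entries are drawn from $S$, and then to read off that equation as a \ksum{} instance. I would first record the linear-algebraic characterization of a match. For {\sf SIMILARITY MATCHING}, identify the plane with $\mathbb{C}$ and write the pattern as $p_1,\dots,p_k$. An orientation-preserving similarity $z\mapsto az+b$ is fixed by the images $s_1,s_2$ of $p_1,p_2$, and with the precomputed constants $\lambda_i=(p_i-p_1)/(p_2-p_1)$ the remaining images are forced to satisfy $s_i=(1-\lambda_i)s_1+\lambda_i s_2$ for $i\ge 3$. Thus a similar copy of $P$ inside $S$ is exactly a tuple $(s_1,\dots,s_k)\in S^k$ obeying these $k-2$ complex linear equations (with $s_1\neq s_2$ for non-degeneracy); orientation-reversing similarities are caught by a second run on the reflected copy of $S$. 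For {\sf AFFINE MATCHING}, take $d+1$ affinely independent pattern points as anchors $p_0,\dots,p_d$ and write each remaining point in barycentric coordinates $p_j=\sum_l \lambda_{j,l}p_l$ with $\sum_l\lambda_{j,l}=1$; since affine maps preserve affine combinations, a copy is a tuple with $s_j=\sum_l\lambda_{j,l}s_l$ for $j\ge d+1$. In both cases this is a homogeneous system of $O(k)$ real scalar equations in the $k$ point-variables, with coefficients computable in $O(1)$ time.

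Next I would collapse the whole system into one scalar equation by taking a random linear combination of its rows: with random coefficients $r$ this yields $\sum_i\langle\beta_i,s_i\rangle=0$, where $\beta_i$ is the induced combination of the $i$-th columns (a complex equation being first split into its real and imaginary parts). Every genuine match satisfies this equation; conversely, for any fixed tuple that violates the original system the left-hand side is a nonzero linear form in $r$, so it vanishes only on a measure-zero set, and since there are at most $n^k$ tuples a union bound (a Schwartz--Zippel / general-position argument) shows that with probability $1$ the combined equation and the system have the same solution set in $S^k$. This is a \gkldt{} instance, which I turn into \ksum{} by forming the $n$-element sets $A_i=\{\langle\beta_i,s\rangle : s\in S\}$ and asking for one element from each summing to zero; the $k$-set version reduces to ordinary single-set \ksum{} by the folklore tagging trick. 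All preprocessing runs in $O(n)$ time.

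The main obstacle is degeneracy. Because each equation preserves affine combinations, the columns satisfy $\sum_i\beta_i=0$, so \emph{every} constant tuple $(s^*,\dots,s^*)$ solves the combined equation; read literally, the \ksum{} instance would always answer ``yes.'' These spurious solutions are precisely the collapsed (degenerate) transformations that must be excluded. I would suppress them by color-coding: partition $S$ into $k$ disjoint random buckets $S_1,\dots,S_k$ and require the $i$-th chosen point to come from $S_i$ (i.e.\ run \ksum{} on the $A_i$ restricted to $S_i$). A constant tuple cannot survive disjoint buckets, whereas a genuine match, whose image points are distinct, is separated correctly with probability at least $k^{-k}=\Omega(1)$, so a constant number of independent trials detects it with high probability. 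The residual difficulty is specific to the affine case: a rank-deficient affine map can send $P$ to $k$ distinct yet affinely dependent points, so if one insists on genuine (invertible) copies, one must additionally certify that the anchor images $s_0,\dots,s_d$ are affinely independent; I expect this non-degeneracy condition to be the most delicate point, and I would discharge it with the same genericity reasoning used for soundness (it becomes vacuous if degenerate affine images are admitted as valid). Combining the $O(1)$ \ksum{} calls with the $O(n)$ construction gives the claimed randomized linear-time reduction.
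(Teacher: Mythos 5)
Your argument follows the same route as the paper's: the same linear characterization of a match (the paper derives the affine-case coefficients via Cramer's rule, which is exactly your barycentric coordinates), the same collapse of the vector-valued system to a single real equation by a random linear combination that is correct with probability~1 (the paper's Lemma~\ref{lem:vec2sc}), and the same absorption of the coefficients $\beta_i$ into the sets $A_i$ (Lemma~\ref{lem:ldt2sum}). The genuine difference is your treatment of degeneracy, and it is to your credit: because the coefficient vectors of these equations satisfy $\sum_i\beta_i=0$, every constant tuple $(s^*,\dots,s^*)$ solves the resulting \kldt{} system, a point the paper's proof passes over in silence; your color-coding patch (random disjoint buckets, one per position, so that a constant tuple cannot survive while a genuine match survives with probability $k^{-k}=\Omega(1)$) is a correct way to suppress these spurious solutions, as is your explicit second run for orientation-reversing similarities and your flagging of rank-deficient affine maps. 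What this costs you relative to the paper's statement is the strength of the reduction: the paper claims a single equivalent instance produced correctly with probability~1, whereas color-coding yields $O(1)$ oracle calls that succeed only with constant probability (a constant number of trials does not give ``high probability''; that needs $\Theta(\log n)$ repetitions, and no finite number recovers probability~1). Two minor points to tighten: the ``folklore tagging trick'' is unnecessary here, since \ksum{} as defined in the paper already takes $k$ separate input sets, and over the reals that trick itself requires a generic or random tag to avoid accidental cancellations; and you should state explicitly which semantics of {\sf AFFINE MATCHING} you adopt (injective-on-$P$ versus invertible), since your bucket construction certifies only the former.
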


We refer the reader to the exact definitions of the \ksum{} problem and the notion of randomized linear-time reduction given later.
Theorem~\ref{thm:main} has a number of consequences.
For instance, combining the reduction provided by Theorem~\ref{thm:main} with the real RAM algorithm for \threesum{} from Chan~\cite{C18}, we obtain the following.

\begin{corollary}
  There exists an $O((n^2/\log n)(\log \log n)^2)$ randomized real RAM algorithm for {\sf SIMILARITY MATCHING} with a pattern size $k=3$.
  In particular, there exists a subquadratic algorithm to detect equilateral triangles in a point set.
\end{corollary}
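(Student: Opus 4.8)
The plan is to compose the reduction of Theorem~\ref{thm:main} with the known real RAM algorithm for \threesum{}. First I would specialize Theorem~\ref{thm:main} to the case $k=3$: it supplies a randomized linear-time reduction from {\sf SIMILARITY MATCHING} with pattern size $3$ to \threesum{}. Because the reduction runs in (randomized) linear time, the \threesum{} instance it outputs has size $m = O(n)$, and this is the one quantitative point I would want to pin down, since the claimed running time is phrased in terms of the original number of input points $n$ and is only correct if the instance handed to the \threesum{} solver is linear in $n$.

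Next I would invoke Chan's algorithm~\cite{C18}, which solves \threesum{} on $m$ numbers in the real RAM model in time $O((m^2/\log m)(\log\log m)^2)$. Running it on the $O(n)$-size instance produced by the reduction costs $O((n^2/\log n)(\log\log n)^2)$, which dominates the $O(n)$ cost of the reduction itself, so the overall running time is as stated. The composition of a randomized linear-time reduction with this solver is a randomized real RAM algorithm for {\sf SIMILARITY MATCHING}, which proves the first assertion. The part that requires care, and which I would regard as the main (if modest) obstacle, is the bookkeeping of the randomization: one must check that the reduction's correctness guarantee composes cleanly with the deterministic solver so that the resulting procedure is genuinely a randomized algorithm deciding the problem, rather than merely a heuristic.

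For the final sentence I would observe that any two equilateral triangles are similar, so the equilateral triangles form a single similarity class. Hence deciding whether a point set $S$ contains three points forming an equilateral triangle is precisely {\sf SIMILARITY MATCHING} with $k=3$ and $P$ taken to be the three vertices of one fixed equilateral triangle; applying the algorithm above yields the same bound. To conclude that this is subquadratic, I would note that $O((n^2/\log n)(\log\log n)^2)$ is $o(n^2)$, since $(\log\log n)^2/\log n \to 0$ as $n \to \infty$.
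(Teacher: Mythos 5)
Your proposal is correct and matches the paper's (implicit) argument exactly: the corollary is obtained by composing the randomized linear-time reduction of Theorem~\ref{thm:main} (specialized to $k=3$) with Chan's $O((n^2/\log n)(\log\log n)^2)$ real RAM algorithm for \threesum{}, noting that the reduction outputs an instance of size $O(n)$. Your additional remarks on the equilateral-triangle special case and on subquadraticity are also consistent with the paper.
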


This contrasts with our current knowledge on the related \threesum{}-hard problem of finding three collinear points, also known as {\sf GENERAL POSITION TESTING}.
Despite recent attempts~\cite{BCILOS19,C18}, it is still an open problem to find a subquadratic algorithm for {\sf GENERAL POSITION TESTING}.

Our next corollary is obtained directly from known algorithms for \ksum{}.
It improves on the best known $O(n^{d+1}\log n)$ algorithm whenever $k<2(d+1)$.
\begin{corollary}
  There exists an $O(n^{\lceil k/2\rceil})$ (for $k$ odd), or an $O(n^{k/2}\log n)$ (for $k$ even) randomized real RAM algorithm for {\sf AFFINE MATCHING}.
\end{corollary}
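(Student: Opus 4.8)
The plan is simply to pipe the reduction of Theorem~\ref{thm:main} into the folklore meet-in-the-middle algorithm for \ksum{} recalled in the introduction. Given an instance of {\sf AFFINE MATCHING} with a pattern of fixed size $k$ and a point set $S$ of size $n$, Theorem~\ref{thm:main} produces in randomized linear time an equivalent instance of \ksum{} that preserves the parameter $k$. Because a reduction cannot emit more output than its running time allows, this \ksum{} instance consists of $N = O(n)$ numbers, and the randomization lies entirely in the reduction; the subsequent solver will be deterministic.

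First I would run the reduction to obtain the \ksum{} instance in time $O(n)$. Then I would apply meet-in-the-middle to it: on $N = O(n)$ numbers this runs in time $O(N^{\lceil k/2\rceil}) = O(n^{\lceil k/2\rceil})$ when $k$ is odd and $O(N^{k/2}\log N) = O(n^{k/2}\log n)$ when $k$ is even, entirely within the real RAM model, the logarithmic factor coming from the sorting and binary searching that meet-in-the-middle performs. Since for {\sf AFFINE MATCHING} we have $k \geq d+2 \geq 4$, the cost of solving \ksum{} dominates the linear cost of the reduction, so summing the two stages gives exactly the claimed running times.

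There is no real geometric or algebraic obstacle here, as all the substantive work is already contained in Theorem~\ref{thm:main} and in the known complexity of \ksum{}. The only point that deserves care is the bookkeeping of randomization: the composed algorithm errs precisely when the randomized reduction errs and achieves its time bound with whatever guarantee (in expectation or with high probability) the reduction provides, so its correctness and complexity follow directly from the notion of randomized linear-time reduction defined after Theorem~\ref{thm:main}. I expect verifying this composition to be the only step worth spelling out.
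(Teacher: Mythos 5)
Your proposal is correct and matches the paper's (implicit) argument exactly: the corollary is obtained by composing the randomized linear-time reduction of Theorem~\ref{thm:main} with the folklore meet-in-the-middle algorithm for \ksum{} quoted in the introduction, with the \ksum{} solver's cost dominating the linear reduction. The paper gives no further proof, so your bookkeeping of the instance size and of the randomization is precisely the routine verification intended.
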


Finally, we consider the nonuniform {\em decision tree} complexity, also known as {\em query complexity}, of the two problems.
By applying a recent result of Kane, Lovett, and Moran~\cite{KLM19}, we can bound the number of algebraic tests that are required to detect copies of $P$ in an input set $S$.

\begin{corollary}
\label{cor:tree}
There exist algebraic decision trees of height $O(n\log^2 n)$ for {\sf SIMILARITY MATCHING} and {\sf AFFINE MATCHING}.
\end{corollary}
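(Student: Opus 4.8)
The plan is to chain the randomized linear-time reduction of Theorem~\ref{thm:main} with the linear decision tree for \ksum{} of Kane, Lovett, and Moran~\cite{KLM19}, and then argue that the randomness can be removed in the nonuniform decision-tree model. First I would fix, for the moment, the random bits used by the reduction. On an input of $n$ points the reduction then runs in deterministic linear time and outputs a \ksum{} instance on $N=O(n)$ numbers $a_1,\dots,a_N$, where each $a_i$ is a fixed bounded-degree algebraic expression in the input coordinates (the arithmetic performed by the reduction). Since \cite{KLM19} supplies a linear decision tree of height $O(N\log^2 N)$ for \ksum{} on $N$ numbers, substituting $N=O(n)$ already gives the target height $O(n\log^2 n)$; what remains is to re-express that tree as an \emph{algebraic} decision tree in the original coordinates, and to eliminate the randomness.

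The second step explains why the composition stays inside the algebraic decision tree model without inflating the height. Every query of the \cite{KLM19} tree tests the sign of a linear form $\sum_i c_i a_i$ in the \ksum{} numbers. Substituting the expressions $a_i$ turns each such query into the sign of a single fixed bounded-degree polynomial in the input coordinates (if the reduction uses divisions, a rational function, which is cleared of denominators at the cost of tracking the signs of finitely many fixed denominators). Thus each node of the KLM tree becomes one algebraic test on the input: the arithmetic of the reduction is absorbed into these tests and is therefore free in the decision-tree model, so the resulting algebraic decision tree has exactly the height of the KLM tree, namely $O(n\log^2 n)$. This substitution is also what explains the word \emph{algebraic} in the statement: linear queries in the \ksum{} numbers become nonlinear (polynomial) queries in the geometric coordinates.

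The main obstacle is the last step, derandomization: the reduction of Theorem~\ref{thm:main} is randomized, while the corollary asserts a single tree correct on \emph{every} input. If the randomness only affects the running time and the reduced instance is always equivalent to the original, then any fixing of the bits yields a correct deterministic reduction and we are done. Otherwise I would exploit nonuniformity and fix one good random string, justified as follows. Whether the composed tree errs on an input depends only on the sign vector of a fixed finite family of polynomials in the input coordinates, so there are only finitely many combinatorial types of input to certify. Running a constant number of independent copies of the reduction in parallel and amplifying drives the per-input error probability below the reciprocal of the number of types, after which a union bound over the types produces a random string correct on all of them simultaneously; the constant number of repetitions multiplies the height only by a constant, preserving $O(n\log^2 n)$. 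The delicate point — exactly what the guarantees of Theorem~\ref{thm:main} must supply — is that the per-input error is bounded away from $1$ \emph{uniformly} over inputs, and that the argument also covers degenerate, lower-dimensional input configurations rather than merely a full-measure set of inputs.
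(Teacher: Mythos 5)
Your first two steps coincide with the paper's proof: compose the reduction with the Kane--Lovett--Moran tree of Theorem~\ref{thm:ksumldt}, substitute the reduction's arithmetic into each linear query to obtain one bounded-degree algebraic query per node (the degree and size stay bounded because $k$ and $d$ are fixed and the queries are $2k$-sparse), and clear the single denominator $\det Q$ in the affine case, branching on its sign. That part is fine and is exactly what the paper does.

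The problem is your derandomization step, which you rightly identify as the crux. The only randomized ingredient of Theorem~\ref{thm:main} is the projection of Lemma~\ref{lem:vec2sc} onto a random direction $v$, and it affects \emph{correctness}, not running time: for a fixed input the reduction errs precisely when some nonzero sum $\sum_i a_i$ happens to be orthogonal to $v$, an event of probability $0$. Amplification therefore accomplishes nothing (the per-input error is already $0$), and the Adleman-style union bound over sign types fails because the error event is not constant on a type: within a single combinatorial type the direction of the offending nonzero sum varies continuously with the input, so for \emph{every} fixed $v$ the set of inputs on which the composed tree errs, though measure zero, is nonempty. Concretely, for {\sf TRIANGLE} and any fixed $v\in\mathbb{R}^2$ one can place a non-similar triple $a,b,c$ with $(u-1)a-ub+c$ a nonzero vector perpendicular to $v$; the projected instance then has a zero $3$-sum and the tree reports a false positive. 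Hence no fixing of the random string yields a tree correct on all inputs, and your repair does not go through. The correct route is to bypass the projection entirely: stop at the deterministic reductions to \ksum{$\mathbb{R}^{\ell}$} (Lemmas~\ref{lem:sim2ksum} and~\ref{lem:aff2ksum}) and apply the machinery of~\cite{KLM19} to the resulting family of $O(n^k)$ sparse linear forms with $\{-1,0,1\}$ coefficients, whose full sign vector it determines with $O(n\log^2 n)$ queries; the answer is then a Boolean function (a conjunction of $\ell$ zero tests per $k$-tuple) of those signs. I note that the paper's own write-up invokes the randomized Theorem~\ref{thm:mainsim} and is silent on this issue, so the subtlety is real; but the resolution is to avoid the randomness, not to fix its seed.
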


In fact, if the pattern $P$ is a fixed parameter, that is, when $P$ is not part of the input, but known at the algorithm design time, then the decision tree in the statement above only involves {\em linear} tests.

\begin{corollary}
\label{cor:treeFP}
There exist linear decision trees of height $O(n\log^2 n)$ for the fixed-parameter versions of {\sf SIMILARITY MATCHING} and {\sf AFFINE MATCHING}, in which $P$ is a fixed parameter of the problems.
\end{corollary}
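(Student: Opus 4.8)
The plan is to revisit the proof of Corollary~\ref{cor:tree} and to pull the linear decision tree for \ksum{} of Kane, Lovett, and Moran~\cite{KLM19} back, test by test, through the reduction of Theorem~\ref{thm:main}. That tree resolves a \ksum{} instance by evaluating comparisons of the form $\sign\!\left(\sum_t c_t\, y_t\right)$, where the $y_t$ are numbers of the produced instance and the coefficients $c_t$ are constants hard-wired in the tree. By construction of the reduction, each number $y_t$ is a function of a \emph{single} input point $s\in S$ together with the pattern $P$. Substituting these functions into the \ksum{} tree turns every comparison into a single test on the coordinates of at most $k=O(1)$ of the input points, and the height bound $O(n\log^2 n)$ is inherited verbatim. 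For the general problem this composed test is algebraic, which is exactly what Corollary~\ref{cor:tree} records; what I must show here is that it becomes \emph{linear} as soon as $P$ is fixed.

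The key step is to verify that, for fixed $P$, each $y_t$ is an \emph{affine} function of the coordinates of the one point it depends on. This is where the two matching problems are well-behaved. The reduction certifies a match by the vanishing of translation-balanced combinations $\sum_j w_j\, s_{i_j}$ of the matched points, where the weight vector $w$ is determined by the pattern alone through the conditions $\sum_j w_j=0$ and $\sum_j w_j\, p_j=0$. Since similarities $z\mapsto az+b$ and affine maps $x\mapsto Ax+b$ act on such combinations by $\sum_j w_j\,\sigma(p_j)=a\sum_j w_j\, p_j$ (respectively $A\sum_j w_j\, p_j$, with reflections handled by conjugated weights), these combinations vanish on every image of $P$; so the reduction feeds to the \ksum{} oracle, for each role $j$ and each candidate $s\in S$, the quantity $w_j\, s$, and a match amounts to selecting one contribution per role summing to zero. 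As a function of the real coordinates of $s$, the map $s\mapsto w_j\, s$ is $\mathbb{R}$-linear (complex multiplication, or the matrix action, being $\mathbb{R}$-linear), and \emph{all} of the nonlinearity lives in the dependence of the weights $w_j$ on $P$. Hence, when $P$ is a fixed parameter, the $w_j$ are constants computed at design time, and every $y_t$ is affine in one input point's coordinates.

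Composing the two observations finishes the argument: a \ksum{} comparison $\sign\!\left(\sum_t c_t\, y_t\right)$ becomes $\sign\!\left(\sum_t c_t\, (\text{affine function of }s_{i(t)})\right)$, i.e.\ the sign of a fixed affine functional of the input coordinates, which is a linear test after appending a constant coordinate to the input. The resulting tree therefore uses only linear tests, has height $O(n\log^2 n)$, and each test touches at most $k$ of the input points.

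I expect the main obstacle to be bookkeeping inside the reduction rather than a new idea. I must check that \emph{every} quantity sent to the oracle genuinely depends on a single point of $S$, so that no product of coordinates of two distinct input points---which would stay nonlinear even for fixed $P$---is ever formed. In particular, the randomized steps of Theorem~\ref{thm:main} must be seen to preserve this structure: they amount to fixing a generic real linear functional (collapsing the vector condition $\sum_j w_j\, s_{i_j}=0$ to a scalar one) together with a relabeling of points, and the composition $s\mapsto w_j\, s\mapsto \lambda(w_j\, s)$ is again $\mathbb{R}$-linear, so for fixed $P$ and fixed random choices the per-point affinity survives. The only remaining point is that the tests are a priori affine rather than strictly linear, which is dispatched by the homogenization device above, leaving the height unchanged.
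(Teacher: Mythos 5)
Your proposal is correct and follows essentially the same route as the paper: pull the Kane--Lovett--Moran linear decision tree back through the reduction of Theorem~\ref{thm:main} and observe that each quantity fed to the \ksum{} oracle is a coordinate of a \emph{single} point of $S$ scaled by coefficients determined only by $P$ (and the random projection vector), so the composed queries contain no products of coordinates of points of $S$ and become linear once $P$ is fixed. The paper dispatches this check in one sentence; you carry it out explicitly (including the randomized projection step), but the argument is the same.
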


In the case $k=3$, {\sf SIMILARITY MATCHING} is one of the two similarity testing problems recently tackled by Aronov, Ezra, and Sharir~\cite{AES20}.
They consider the problem of deciding, given three sets $A,B,C$ of $n$ points in the plane, whether there exists $(a,b,c)\in A\times B\times C$ that simultaneously satisfies two real polynomial equations.
They provide a subquadratic upper bound on the algebraic decision tree complexity of this problem.
We observe that {\sf SIMILARITY MATCHING} with $k=3$ can be cast as such a problem in which the two equations are linear, and that in that case the decision tree complexity becomes near-linear. We therefore improve on one of the results of Corollary 4.4 in~\cite{AES20}.
A thorough discussion of the relation between the two problems and that of testing polynomials for vanishing on product point sets can be found in the full version of the paper~\cite{AES20}.

\paragraph{Plan.} In the next section, we define a number of variants of the \ksum{} problem and prove they are all equivalent in the computation model we consider. In Section~\ref{sec:similar}, we prove our main result for {\sc SIMILARITY MATCHING}. Section~\ref{sec:affine} considers the {\sc AFFINE MATCHING} problem. The last section is dedicated to the proof of Corollaries~\ref{cor:tree} and~\ref{cor:treeFP}.

\section{Linear degeneracy testing}
\label{sec:ldt}

We first give a definition of the \ksum{} problem.
Here, $k\geq 3$ is a fixed integer, and $X$ is a ring. 

\begin{problem}[\ksum{$X$}]
Given $k$ sets $A_1,\ldots ,A_k$ of $n$ elements of $X$, determine whether there exists a $k$-tuple $a_1,\ldots ,a_k\in\vartimes_{i=1}^k A_i$ such that $\sum_{i=1}^k a_i =0$.
\end{problem}

Our next problem is often referred to as {\em linear degeneracy testing}~\cite{AC05,DGS20}.
We consider the cases where $X=\mathbb{R}$ or $\mathbb{C}$ with the usual addition and multiplication operations, or where $X=\mathbb{R}^d$ or $\mathbb{C}^d$ for some integer $d\geq 2$, with the vector addition and Hadamard (entrywise) product defined by $(uv)_i = u_iv_i$.
In the latter cases, the all-zero vector is denoted by $0$, and the all-one vector by $1$.

\begin{problem}[\kldt{$X$}]
For a linear function $f\colon X^k\to X$ given by $f(a_1,\ldots ,a_k)= \beta_0 + \sum_{i=1}^k \beta_i a_i$ with $\beta_i\in X$ for $0\leq i\leq k$,
given $k$ sets $A_1,\ldots ,A_k$ of $n$ elements of $X$, determine whether there exists a $k$-tuple $a_1,\ldots ,a_k\in\vartimes_{i=1}^k A_i$ such that $f(a_1,\ldots ,a_k)=0$.
\end{problem}

We make two observations. First, these are {\em fixed-parameter problems}: the integer $k$ is part of the definition of the problem, not of the input. The same can be assumed for the function $f$.
Such parameters will be referred to as {\em fixed} in what follows.
Another observation is that using the Hadamard product in the definition of the function $f$ allows us to combine conditions on the sought $k$-tuples: In the ring $X$, searching for $k$-tuples that simultaneously satisfy $d$ linear equations can be cast as \kldt{$X^d$}.

It is clear that \ksum{} is the special case of \kldt{} in which $\beta_0=0$ and $\beta_i=1$ for $1\leq i\leq k$.
On the other hand, \kldt{} is not harder than \ksum{}.
\begin{lemma}
\label{lem:ldt2sum}
For any integer $d>0$, \kldt{$X$} reduces in linear time to \ksum{$X$}.
\end{lemma}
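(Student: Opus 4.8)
The plan is to absorb the coefficients $\beta_1,\ldots,\beta_k$ and the constant term $\beta_0$ directly into the input sets, so that the linear form $f$ becomes a plain sum. Given a \kldt{$X$} instance with $f(a_1,\ldots,a_k)=\beta_0+\sum_{i=1}^k\beta_i a_i$ and input sets $A_1,\ldots,A_k$, I would build a \ksum{$X$} instance on new sets $A_1',\ldots,A_k'$ defined by $A_1'=\{\beta_1 a+\beta_0 : a\in A_1\}$ and $A_i'=\{\beta_i a : a\in A_i\}$ for $2\le i\le k$, where all products are taken in the ring $X$ (the ordinary product when $X=\mathbb{R}$ or $\mathbb{C}$, and the Hadamard product when $X=\mathbb{R}^d$ or $\mathbb{C}^d$). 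Each $A_i'$ is obtained from $A_i$ by one ring multiplication per element, plus one extra addition for $A_1'$, so all $k$ sets are produced in $O(kn)$ ring operations, which is linear in the input size since $k$ and $f$ are fixed parameters.

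For correctness, I would observe that for any tuple $(a_1,\ldots,a_k)\in\vartimes_{i=1}^k A_i$ the corresponding tuple $(a_1',\ldots,a_k')$ with $a_1'=\beta_1 a_1+\beta_0$ and $a_i'=\beta_i a_i$ satisfies $\sum_{i=1}^k a_i'=\beta_0+\sum_{i=1}^k\beta_i a_i=f(a_1,\ldots,a_k)$. Hence the transformed sets contain a $k$-tuple summing to $0$ if and only if the original sets contain a $k$-tuple on which $f$ vanishes, and the two instances therefore have the same answer. Since the map acts indexwise on tuples, it neither creates spurious solutions nor loses genuine ones, which settles the reduction.

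The only point that needs care is that the $\beta_i$ are arbitrary ring elements and need not be invertible: some $\beta_i$ may equal $0$, or, in the vector setting, may have zero coordinates under the Hadamard product. I want to stress that invertibility is never used — the argument above only relies on $a\mapsto\beta_i a$ being a well-defined map applied identically to every element of $A_i$, which is exactly what preserves the correspondence between zero-sum tuples and zeros of $f$. The one genuine nuisance, and the step I expect to require an explicit remark rather than a one-line computation, is bookkeeping on set sizes: when $\beta_i=0$ the set $A_i'$ collapses toward $\{0\}$ and may no longer have $n$ distinct elements. This is harmless, as one can either permit the \ksum{} input to be a multiset or pad $A_i'$ back up to $n$ elements without changing the answer, so no real obstacle remains.
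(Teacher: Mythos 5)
Your reduction is correct and is essentially the paper's own proof: the paper likewise sets $A'_i=\{\beta_i a\mid a\in A_i\}$ and absorbs the constant $\beta_0$ into one of the sets (the last rather than the first, an immaterial choice). Your extra remarks on non-invertible $\beta_i$ and on the multiset/padding issue are sound but go beyond what the paper bothers to record.
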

\begin{proof}
Consider the sets $A_i$ from the \kldt{} instance, and let $A'_i \coloneqq \{\beta_i a \mid a\in A_i \}$ for all $1\leq i< k$, and $A'_k\coloneqq \{\beta_k a+\beta_0\mid a\in A_k\}$.
Then the instance of \ksum{} composed of the sets $A'_i$ has a solution if and only if the instance of \kldt{} has a solution.
\end{proof}

In what follows, we say that a problem {\sf A} {\em reduces to problem {\sf B} in randomized $g(n)$ time}
if there exists an algorithm in the real RAM model with access to random real numbers in $[0,1]$ that 
maps any instance of size $n$ of {\sf A} to an equivalent instance of {\sf B} in time $O(g(n))$ with probability 1.
If we insist on using only random bits, we can make the error probability arbitrary small by using sufficiently many random bits.

Over the reals, the vector and scalar versions of \ksum{} are also essentially equivalent, up to such a randomized reduction.
\begin{lemma}
\label{lem:vec2sc}
For any fixed integer $d>0$, \ksum{$\mathbb{R}^d$} reduces in randomized linear time to \ksum{$\mathbb{R}$} .
\end{lemma}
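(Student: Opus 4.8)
The plan is to collapse the $d$ coordinates into a single scalar via a random linear projection. Concretely, I would draw a random vector $v=(v_1,\dots,v_d)$ by sampling each coordinate independently and uniformly from $[0,1]$, which the reduction model permits, and set $A_i' \coloneqq \{\langle v,a\rangle \mid a\in A_i\}$ for each $i$. Since $d$ is a fixed parameter, computing all the inner products takes $O(dn)=O(n)$ time, so the map from the vector instance to the resulting scalar instance runs in linear time, as required.

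The forward implication is immediate and holds for every choice of $v$: if some $k$-tuple $(a_1,\dots,a_k)$ satisfies $\sum_{i=1}^k a_i = 0$ in $\mathbb{R}^d$, then by linearity $\sum_{i=1}^k \langle v,a_i\rangle = \langle v, \sum_{i=1}^k a_i\rangle = 0$, so the scalar instance built from the $A_i'$ is a yes-instance. The real content is the converse: a $k$-tuple whose projected values sum to zero need not have vanishing vector sum, so I must rule out such false positives, and this is where the randomness enters.

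The key observation I would use is that there are at most $n^k$ tuples, hence only finitely many distinct vector sums $w = \sum_{i=1}^k a_i$. Suppose the vector instance has no solution, so every such $w$ is a nonzero vector. For a fixed nonzero $w$, the projection $\langle v,w\rangle$ vanishes exactly when $v$ lies on the hyperplane $w^\perp$, a set of Lebesgue measure zero in $[0,1]^d$. Taking the union over the finitely many nonzero sums $w$ still yields a measure-zero set, so with probability $1$ the random $v$ avoids all of them; for such a $v$ every projected sum equals the nonzero scalar $\langle v,w\rangle$, and the scalar instance is a no-instance as well. Combining this with the forward direction, with probability $1$ the scalar instance is a yes-instance if and only if the vector instance is, which is precisely the equivalence demanded by the notion of randomized linear-time reduction.

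The main obstacle is exactly this converse: one must argue that a single scalar projection faithfully encodes what are, a priori, $d$ independent coordinate equalities. What makes it go through is that equivalence is only required between the two specific instances at hand, not tuple by tuple. Because the number of candidate tuples is finite, the set of ``bad'' projection directions is a finite union of hyperplanes and therefore negligible, which is why a single random projection suffices despite collapsing $d$ coordinates into one.
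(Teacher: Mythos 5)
Your proposal is correct and follows essentially the same route as the paper: a random linear projection of each point to a scalar, with the forward direction holding deterministically by linearity and the converse holding with probability $1$ because the bad projection directions for the finitely many nonzero tuple sums form a measure-zero union of hyperplanes. The only cosmetic differences are that the paper draws a uniform random unit vector rather than a vector with i.i.d.\ uniform coordinates, and your explicit union over the at most $n^k$ tuples makes the probability-$1$ claim slightly more careful than the paper's per-tuple phrasing.
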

\begin{proof}
Given an instance $\{A_1,\ldots ,A_k\}$ of \ksum{$\mathbb{R}^d$}, pick a uniform random unit vector $v\in\mathbb{R}^d$ and 
consider the sets $A'_i \coloneqq \{a\cdot v \mid a\in A_i\}\subset \mathbb{R}$ (here $a\cdot v$ is the usual dot product).
They form an instance of \ksum{$\mathbb{R}$} such that any solution to the original instance of \ksum{$\mathbb{R}^d$} is also a solution.
In the other direction, suppose there is a $k$-tuple $a'_1,\ldots ,a'_k\in\vartimes_{i=1}^k A'_i$ such that $\sum_{i=1}^k a'_i =0$, where $a'_i=a_i\cdot v$.
Hence we have $\sum_{i=1}^k a_i\cdot v=0$, which is either because $v\perp \sum_{i=1}^k a_i$ and $\sum_{i=1}^k a_i\not= 0$, or because $\sum_{i=1}^k a_i=0$.
Since $v\perp \sum_{i=1}^k a_i$ and $\sum_{i=1}^k a_i\not= 0$ occurs with probability 0, the $k$-tuple $a_1,\ldots ,a_k$ is a solution of the instance 
$\{A_1,\ldots ,A_k\}$ of \ksum{$\mathbb{R}^d$} with probability 1.
\end{proof}

We also make the following simple observation:

\begin{observation}
\label{obs:c2r}
\ksum{$\mathbb{C}^d$} is equivalent to \ksum{$\mathbb{R}^{2d}$}.
\end{observation}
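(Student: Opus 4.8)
The plan is to exhibit a natural additive-group isomorphism between $\mathbb{C}^d$ and $\mathbb{R}^{2d}$ and to observe that \ksum{} depends only on this additive structure. The crucial point, which I would state explicitly at the outset, is that the definition of \ksum{$X$} uses only the \emph{addition} of the ring $X$ together with the question of whether a sum equals the zero element; the Hadamard product plays no role in \ksum{} (it enters only through the coefficients $\beta_i$ in \kldt{}). Consequently, two rings with isomorphic additive groups yield equivalent \ksum{} problems, regardless of how their multiplications differ.

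First I would fix the bijection $\phi\colon \mathbb{C}^d\to\mathbb{R}^{2d}$ sending $z=(z_1,\dots,z_d)$ with $z_j=x_j+iy_j$ to $(x_1,y_1,\dots,x_d,y_d)$, that is, each complex coordinate is replaced by the pair consisting of its real and imaginary parts. This $\phi$ is an isomorphism of additive groups: it is a bijection, it carries the all-zero vector of $\mathbb{C}^d$ to the all-zero vector of $\mathbb{R}^{2d}$, and it satisfies $\phi(u+w)=\phi(u)+\phi(w)$ because complex addition acts componentwise on real and imaginary parts. Next I would translate an instance: given $A_1,\dots,A_k\subseteq\mathbb{C}^d$, apply $\phi$ elementwise to obtain $A'_i\coloneqq\phi(A_i)\subseteq\mathbb{R}^{2d}$. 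Since $\phi$ is additive and bijective, for any $k$-tuple $a_1,\dots,a_k$ we have $\sum_{i=1}^k a_i=0$ in $\mathbb{C}^d$ if and only if $\sum_{i=1}^k \phi(a_i)=0$ in $\mathbb{R}^{2d}$, so the two instances are equivalent. The inverse map $\phi^{-1}$ furnishes the reduction in the other direction in exactly the same way. Each application of $\phi$ or $\phi^{-1}$ merely splits or reassembles coordinates, so the whole transformation runs in $O(kn)=O(n)$ time for fixed $k$ and $d$, establishing the claimed equivalence.

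I do not expect any real obstacle here; the only thing to guard against is the tempting but irrelevant fact that $\mathbb{C}$ and $\mathbb{R}^2$ differ as \emph{rings} ($\mathbb{C}$ is a field, whereas $\mathbb{R}^2$ under the Hadamard product has zero divisors). Because \ksum{} never multiplies elements drawn from the input sets, this discrepancy is immaterial, and the additive isomorphism $\phi$ suffices for both directions of the reduction.
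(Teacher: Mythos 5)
Your proposal is correct and matches the argument the paper implicitly relies on (the paper states this as a ``simple observation'' without proof): identify $\mathbb{C}^d$ with $\mathbb{R}^{2d}$ via real and imaginary parts and note that \ksum{} uses only the additive structure, so the coordinate-splitting bijection transports instances in both directions in linear time. Your explicit remark that the ring multiplications differ but are irrelevant is a nice touch, not a deviation.
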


\section{Searching for a similar copy}
\label{sec:similar}

We first consider the special case of the {\sc SIMILARITY MATCHING} problem in which $k=3$.
\begin{problem}[{\sf TRIANGLE}]
Given a triangle $\Delta$ and a set $S$ of $n$ points in the plane, determine whether $S$ contains three points whose convex hull is similar to $\Delta$.
\end{problem}

The short proof of the following result uses the interpretation of points in the plane as complex numbers, an idea that was exploited in a combinatorial context before~\cite{EE94,LR97}.
\begin{lemma}
{\sf TRIANGLE} reduces in linear time to \threesum{$\mathbb{C}$}.
\end{lemma}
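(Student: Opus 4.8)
The plan is to identify the Euclidean plane with the complex plane $\mathbb{C}$, writing each point as a complex number, and to exploit the fact that a direct (orientation-preserving) similarity is exactly a map $z\mapsto az+b$ with $a,b\in\mathbb{C}$ and $a\neq 0$. Writing $\Delta$ as an ordered triple $(w_1,w_2,w_3)$ of complex numbers, three points $z_1,z_2,z_3\in S$ are the image of $(w_1,w_2,w_3)$ under such a map precisely when $z_i=aw_i+b$ for all $i$. First I would eliminate the unknowns $a$ and $b$ from these three equations, which yields the single linear identity
\[
z_1(w_2-w_3)+z_2(w_3-w_1)+z_3(w_1-w_2)=0,
\]
whose coefficients $\gamma_1,\gamma_2,\gamma_3$ are fixed complex numbers depending only on $\Delta$. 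I would record that $\gamma_1+\gamma_2+\gamma_3=0$, a reflection of the translation invariance of similarity.

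Given this identity, the reduction is immediate: taking $A_1=A_2=A_3=S$ and the linear function $f(z_1,z_2,z_3)=\gamma_1 z_1+\gamma_2 z_2+\gamma_3 z_3$ produces an instance of \threeldt{$\mathbb{C}$} whose solutions are exactly the ordered triples in $S$ realizing a direct copy of $\Delta$; Lemma~\ref{lem:ldt2sum} with the ring $X=\mathbb{C}$ then converts it into an equivalent \threesum{$\mathbb{C}$} instance in linear time. Since the problem asks for \emph{similarity} rather than direct similarity, I would also treat orientation-reversing copies by replacing each $w_i$ with its conjugate $\bar w_i$; this gives a second linear identity, with coefficients $\bar\gamma_i$, and hence a second \threesum{$\mathbb{C}$} instance. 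A triangle similar to $\Delta$ satisfies one identity or the other, so the two instances together capture {\sf TRIANGLE}.

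The step I expect to be the main obstacle is ruling out degenerate ``solutions.'' A short case analysis shows that the identity is never satisfied by a triple in which exactly two of the $z_i$ coincide, and that every solution with pairwise distinct points is a genuine non-degenerate triangle directly similar to $\Delta$ (since then $a\neq0$). However, because $\gamma_1+\gamma_2+\gamma_3=0$, the identity holds trivially for every coincident triple $z_1=z_2=z_3$, so the raw \threesum{$\mathbb{C}$} instance always has a solution and is not yet equivalent to {\sf TRIANGLE}. The heart of the argument is therefore to force the three chosen points to be distinct, and this is delicate: no additive tag in an auxiliary coordinate can excise the diagonal $z_1=z_2=z_3$ without also killing the genuine solutions. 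I would resolve it by color-coding, randomly assigning each point of $S$ one of three colors, restricting $A_i$ to the $i$-th color class, and repeating a constant number of times: a fixed similar triangle becomes rainbow with constant probability, while no coincident triple is ever rainbow. Finally I would merge the resulting constant number of \threesum{$\mathbb{C}$} instances into a single equivalent one by separating them in one extra tagging coordinate, keeping the whole reduction linear. This distinctness step is exactly where care is required, and where randomization enters.
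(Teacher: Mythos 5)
Your core reduction is exactly the one in the paper: identify the plane with $\mathbb{C}$, observe that a direct similarity is $z\mapsto az+b$, and eliminate $a,b$ to obtain a single linear equation over $\mathbb{C}$ whose coefficients depend only on $\Delta$ (the paper normalizes $\Delta$ to $\{0,1,u\}$ and writes the equation as $(u-1)a-ub+c=0$, which is your identity up to a scalar); Lemma~\ref{lem:ldt2sum} then finishes. The two extra issues you raise are not artifacts of your write-up --- they are genuine points that the paper's three-line proof passes over in silence. On reflections: the paper only argues the ``same orientation'' case, and one indeed needs a second instance with conjugated coefficients unless only direct similarities are meant. On the diagonal: you are right that translation invariance forces $\gamma_1+\gamma_2+\gamma_3=0$, so every triple $z_1=z_2=z_3$ satisfies the equation, and with $A_1=A_2=A_3=S$ the resulting \threeldt{$\mathbb{C}$} instance is always a yes-instance; your case analysis showing that no triple with \emph{exactly} two coincident points is a spurious solution is also correct. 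In these respects your proposal is more complete than the paper's own proof.

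The one place where your proposal falls short of the statement is the repair of the diagonal. A constant number of independent random $3$-colorings makes a fixed similar triangle rainbow only with constant probability, so your reduction has one-sided error bounded away from $0$; that matches neither the lemma's claim of a (deterministic) linear-time reduction nor the paper's own definition of a randomized reduction, which requires producing an \emph{equivalent} instance with probability $1$. A guarantee with certainty would need a perfect-hash/splitter family of $\Theta(\log n)$ colorings, and merging those blows the output up to size $\Theta(n\log n)$. (Also, merging constantly many instances via an additive tag coordinate requires the tags to be chosen so that no cross-instance combination sums to zero; generic reals work, but this should be said.) In short, your degeneracy concern exposes a real gap in the paper's argument, but the color-coding fix as described does not yet yield the lemma as stated: one must either adopt a convention for \ksum{} that excludes repeated indices, or supply a sharper deterministic device for excising the diagonal.
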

\begin{proof}
Let $u=re^{i\theta}$ be such that the three numbers $0,1,u$ are the vertices of a triangle similar to $\Delta$ in the complex plane.
Recall that multiplying by $re^{i\theta}$ has a geometric interpretation in the complex plane as scaling by a factor $r$ and rotating by an angle $\theta$. 
Hence three other complex numbers $a,b,c\in\mathbb{C}$ form a triangle similar to $\Delta$ in the complex plane with the same orientation if and only if $c-a=u(b-a)$, or equivalently if $(u-1)a-u b+c = 0$.
Hence {\sf TRIANGLE} reduces to \threeldt{$\mathbb{C}$} with $\beta = (u-1, -u, 1)$.
From Lemma~\ref{lem:ldt2sum}, it reduces in linear time to \threesum{$\mathbb{C}$}.
\end{proof}

Combining with Observation~\ref{obs:c2r} and Lemma~\ref{lem:vec2sc}, we obtain:
\begin{theorem}
{\sf TRIANGLE} reduces in randomized linear time to \threesum{$\mathbb{R}$}.
\end{theorem}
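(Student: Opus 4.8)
The plan is simply to compose the three reductions already established in the excerpt. First I would invoke the preceding lemma, which reduces {\sf TRIANGLE} to \threesum{$\mathbb{C}$} in deterministic linear time using the complex-number encoding of the plane: this produces an instance of \threesum{} over the ring $\mathbb{C}$. Next I would apply Observation~\ref{obs:c2r} with $d=1$, which tells us that \threesum{$\mathbb{C}$} is equivalent to \threesum{$\mathbb{R}^2$}. Concretely, each complex number $a+bi$ is reinterpreted as the real vector $(a,b)$, and a triple of complex numbers sums to $0$ if and only if the corresponding triple of vectors sums to the all-zero vector; since no multiplication is involved in the \threesum{} condition, this is a mere reinterpretation of the data and costs linear time.

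Finally I would apply Lemma~\ref{lem:vec2sc} with $d=2$ and $k=3$, which reduces \threesum{$\mathbb{R}^2$} to \threesum{$\mathbb{R}$} in randomized linear time by projecting every input point onto a uniformly chosen random unit vector $v\in\mathbb{R}^2$. Chaining these three steps yields a reduction from {\sf TRIANGLE} to \threesum{$\mathbb{R}$}, and the remaining task is to verify that this composition is itself a randomized linear-time reduction. Each constituent step runs in (randomized) linear time, so the total running time is $O(n)+O(n)+O(n)=O(n)$; correctness follows because the first two steps are deterministic and exactly answer-preserving, so correctness of the composition reduces to correctness of the final step, which holds with probability $1$ by Lemma~\ref{lem:vec2sc}.

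I do not expect any genuine obstacle here: the mathematical substance lives entirely in the three component results, and the present statement is just their concatenation. If anything warrants care, it is only the probabilistic bookkeeping of the last step, namely that the random projection preserves the answer with probability $1$ because the bad event ``$v\perp\sum_i a_i$ while $\sum_i a_i\neq 0$'' has measure zero; but this is exactly what Lemma~\ref{lem:vec2sc} guarantees, so the argument amounts to stating the chain of reductions and observing that randomized linear-time reductions compose.
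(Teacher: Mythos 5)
Your proposal is correct and matches the paper's argument exactly: the paper also obtains this theorem by composing the {\sf TRIANGLE}-to-\threesum{$\mathbb{C}$} lemma with Observation~\ref{obs:c2r} and Lemma~\ref{lem:vec2sc}. Your write-up merely spells out the bookkeeping that the paper leaves implicit.
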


This generalizes naturally to larger patterns.
\begin{lemma}
  \label{lem:sim2ksum}
{\sf SIMILARITY MATCHING} reduces in linear time to \ksum{$\mathbb{C}^{k-2}$}. 
\end{lemma}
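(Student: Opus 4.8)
The plan is to extend the complex-number encoding used for {\sf TRIANGLE} to a pattern of arbitrary size $k$, turning the search for an orientation-preserving similar copy into a system of $k-2$ complex linear equations and then folding that system into a single $\mathbb{C}^{k-2}$-valued test by means of the Hadamard product. First I would identify the plane with $\mathbb{C}$ and write the pattern as distinct complex numbers $p_1,\ldots ,p_k$; since $P$ is a set, I may fix any two of them, say $p_1\neq p_2$, as reference points. An orientation-preserving similarity is a map $z\mapsto \alpha z+\beta$ with $\alpha\in\mathbb{C}\setminus\{0\}$ and $\beta\in\mathbb{C}$, so a $k$-tuple $a_1,\ldots ,a_k$ of points of $S$ is such an image of $P$ exactly when $a_i=\alpha p_i+\beta$ for all $i$.

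Next I would eliminate the two unknowns $\alpha,\beta$. The equations for $i=1,2$ give $\alpha=(a_2-a_1)/(p_2-p_1)$ and $\beta=a_1-\alpha p_1$; substituting into the remaining equations yields, for each $j\in\{3,\ldots ,k\}$,
\[
a_j=(1-\lambda_j)\,a_1+\lambda_j\,a_2,\qquad \lambda_j:=\frac{p_j-p_1}{p_2-p_1},
\]
a homogeneous complex linear equation in $a_1,a_2,a_j$. I would then pack these $k-2$ equations into a single \kldt{$\mathbb{C}^{k-2}$} test: embed each point $s\in S$ diagonally as $(s,\ldots ,s)\in\mathbb{C}^{k-2}$, take every set $A_i$ to be this embedding of $S$, set $\beta_0=0$, and (indexing the $k-2$ coordinates by $j\in\{3,\ldots ,k\}$) choose coefficient vectors $\beta_1,\ldots ,\beta_k\in\mathbb{C}^{k-2}$ whose $j$-th coordinates are $1-\lambda_j$ for $\beta_1$ and $\lambda_j$ for $\beta_2$, while for $i\geq 3$ the vector $\beta_i$ is $-1$ in its $i$-th coordinate and $0$ elsewhere. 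By the remark following the definition of \kldt{}, the Hadamard product makes coordinate $j$ of $f$ evaluate equation $j$, so $f$ vanishes precisely when all $k-2$ equations hold. Applying Lemma~\ref{lem:ldt2sum} then converts this into \ksum{$\mathbb{C}^{k-2}$}, and for fixed $k$ all $\lambda_j$ and transformed sets are computable in $O(n)$ time.

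Two points require care, and the second is the real obstacle. First, a general similarity may reverse orientation; to capture the conjugate maps $z\mapsto \alpha\bar z+\beta$ I would run the identical construction on the mirrored pattern $\bar p_1,\ldots ,\bar p_k$ (equivalently, replacing each $\lambda_j$ by $\overline{\lambda_j}$) and accept a copy if either instance has a solution, which merely doubles the work. Second, and more delicate, the coefficients of every equation sum to $(1-\lambda_j)+\lambda_j-1=0$, so the degenerate diagonal tuples $a_1=\cdots =a_k=s$, corresponding to the excluded value $\alpha=0$, satisfy the whole system for every $s\in S$. No purely linear test can separate these from genuine copies, since the equations cut out the two-dimensional subspace $\{(\alpha p_i+\beta)_i\}$ and any linear relation vanishing on it is a combination of the equations above, hence also vanishes on the diagonal. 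The reduction must therefore target the variant of \ksum{} asking for a tuple on $k$ \emph{distinct} points of $S$; here I would observe that any solution with $a_1\neq a_2$ has $\alpha\neq 0$ and thus, because the $p_i$ are distinct, automatically consists of $k$ distinct points forming a genuine copy, and I would invoke the standard equivalence between this distinct-elements version and plain \ksum{}. Making that last step fully rigorous, rather than the linear-algebraic elimination, is where I expect the main difficulty to lie.
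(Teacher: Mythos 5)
Your construction is, at bottom, the same as the paper's. The paper pre-normalizes the pattern to $Q=\{0,1,u_1,\ldots ,u_{k-2}\}$ and writes the $k-2$ conditions $a_i-a_1=u_{i-2}(a_2-a_1)$; these are exactly your equations $a_j=(1-\lambda_j)a_1+\lambda_j a_2$ under the substitution $u_{i-2}=\lambda_i$, and the packaging into \kldt{$\mathbb{C}^{k-2}$} via the Hadamard product followed by Lemma~\ref{lem:ldt2sum} is identical. Your first ``point requiring care'' --- orientation-reversing similarities, handled by running a second instance on the conjugated pattern --- is something the paper passes over in silence (its {\sf TRIANGLE} argument even restricts explicitly to ``the same orientation''); your fix is correct, costs only a factor of two, and is needed whenever ``similarity'' is understood to include reflections.

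Your second point is the substantive one, and you are right about it: since the coefficients of each equation sum to zero, every diagonal tuple $a_1=\cdots =a_k=s$ satisfies the system, so the \ksum{} instance produced by the reduction as literally written is a YES instance for every nonempty $S$; and your argument that no linear test can exclude the diagonal (the solution variety is the closure of the set of images of $P$, which contains the locus $\alpha=0$) is also correct, so the repair cannot happen inside the \kldt{} formulation. You have not introduced a gap here so much as located one that the paper's own proof does not acknowledge. That said, the step you leave as a black box --- the ``standard equivalence'' between the distinct-elements variant and plain \ksum{} --- is precisely what still has to be written, and it is not free. One concrete way to close it: a system solution is genuine if and only if $a_1\neq a_2$ (then $\alpha\neq 0$ and distinctness of the $p_i$ gives distinctness of all $a_i$, as you note), so index $S$ as $s_1,\ldots ,s_n$ and, for each bit position $b$ of the indices, form two instances in which $A_1$ and $A_2$ are the two parts of $S$ determined by bit $b$ (in both orders) while $A_3,\ldots ,A_k$ remain all of $S$. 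Any pair of distinct indices is separated by some bit, so this is deterministic and correct, but it yields $O(\log n)$ instances of \ksum{} rather than one --- enough for all of the paper's corollaries up to a logarithmic factor, yet strictly weaker than the single-instance linear-time reduction the lemma claims. So the difficulty you flag at the end is exactly where the remaining work lies, for your write-up and for the paper alike.
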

\begin{proof}
Let $u_1,\ldots ,u_{k-2}\in\mathbb{C}$ be such that the set $Q=\{0,1,u_1,\ldots ,u_{k-2}\}$ is similar to $P$ in the complex plane.
Then $k$ numbers $a_1,\ldots ,a_k\in\mathbb{C}$ form a similar copy of $Q$ in the complex plane, with $a_1$ mapped to $0$, $a_2$ to $1$, and so on, if and only if $a_i - a_1 = u_{i-2} (a_2 - a_1)$ for all $3\leq i\leq k$. 
These are $k-2$ linear equations on the $k$ complex numbers $a_1,\ldots ,a_k$, hence {\sf SIMILARITY MATCHING} reduces in linear time to \kldt{$\mathbb{C}^{k-2}$}. 
From Lemma~\ref{lem:ldt2sum}, it reduces in linear time to \ksum{$\mathbb{C}^{k-2}$}.
\end{proof}

Again, combining with Observation~\ref{obs:c2r} and Lemma~\ref{lem:vec2sc}, we obtain the first statement of Theorem~\ref{thm:main}.
\begin{theorem}
  \label{thm:mainsim}
{\sf SIMILARITY MATCHING} reduces in randomized linear time to \ksum{$\mathbb{R}$}.
\end{theorem}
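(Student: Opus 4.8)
The plan is to chain together the three reductions already established. The starting point is Lemma~\ref{lem:sim2ksum}, which maps any instance of {\sf SIMILARITY MATCHING} to an equivalent instance of \ksum{$\mathbb{C}^{k-2}$} in deterministic linear time. I would then invoke Observation~\ref{obs:c2r} to reinterpret this complex-vector instance as an instance of \ksum{$\mathbb{R}^{2(k-2)}$}; since this is an equivalence rather than a reduction, it incurs no asymptotic overhead and preserves the answer exactly. Finally, I would apply Lemma~\ref{lem:vec2sc} with $d = 2(k-2)$ to reduce \ksum{$\mathbb{R}^{2(k-2)}$} to \ksum{$\mathbb{R}$} in randomized linear time. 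The composition of these three maps is the desired reduction.

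The only point requiring care is verifying that composing a deterministic linear-time reduction, an equivalence, and a randomized linear-time reduction again yields a randomized linear-time reduction in the sense defined in the excerpt. Because $k$ is a fixed parameter, the dimension $d = 2(k-2)$ is a constant; hence each of the three steps produces an instance of size $O(n)$ in time $O(n)$, and their composition also runs in $O(n)$ time and outputs an instance of size $O(n)$. The randomization enters only through Lemma~\ref{lem:vec2sc}, where a uniformly random unit vector is drawn; the guarantee that the resulting instance is equivalent holds with probability $1$, which matches the definition of randomized linear-time reduction.

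It then remains to confirm that equivalence is transported through the entire chain, i.e.\ that an instance of {\sf SIMILARITY MATCHING} is a yes-instance if and only if the final \ksum{$\mathbb{R}$} instance is. This is immediate, since the first two steps preserve the answer exactly and the third preserves it with probability $1$. I do not expect a genuine obstacle here: the statement is essentially a one-line corollary of Lemma~\ref{lem:sim2ksum}, Observation~\ref{obs:c2r}, and Lemma~\ref{lem:vec2sc}, and the only substantive bookkeeping is tracking that the dimension stays constant and that the probability-$1$ correctness of the projection step survives composition.
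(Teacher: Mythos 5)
Your proposal is correct and follows exactly the paper's own route: the paper proves Theorem~\ref{thm:mainsim} precisely by combining Lemma~\ref{lem:sim2ksum} with Observation~\ref{obs:c2r} and Lemma~\ref{lem:vec2sc}. Your additional bookkeeping about the constant dimension $2(k-2)$ and the probability-$1$ correctness surviving composition is accurate but left implicit in the paper.
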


\section{Searching for an affine image}
\label{sec:affine}

We now prove the analogous result for the affine case.
As a warm-up, we first consider the following simpler special case of {\sf AFFINE MATCHING} in which the pattern is a square.
Four points form the affine image of vertices of a square if and only if they are the vertices of a (possibly degenerate) parallelogram.
Hence the problem can be cast as follows.

\begin{problem}[{\sf PARALLELOGRAM}]
Given a set $S$ of $n$ points in the plane, determine whether $S$ contains four points whose convex hull is a parallelogram.
\end{problem}

\begin{theorem}
{\sf PARALLELOGRAM} reduces in randomized linear time to \xsum{$4$}{$\mathbb{R}$}.
\end{theorem}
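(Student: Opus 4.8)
The plan is to encode the parallelogram condition as a single vector-valued linear relation and then feed it through the reductions of Section~\ref{sec:ldt}. First I would record the elementary fact that four points $a_1,a_2,a_3,a_4$ of the plane are the vertices of a (possibly degenerate) parallelogram with diagonals $a_1a_3$ and $a_2a_4$ precisely when the two diagonals share their midpoint, that is, when $a_1+a_3=a_2+a_4$, equivalently $a_1-a_2+a_3-a_4=0$. A quadruple of points is a parallelogram under \emph{some} pairing into diagonals, and since the fourfold product below ranges over all pairings, this one relation captures the problem. I would also record the useful fact that two \emph{distinct} unordered pairs with equal coordinate-sum are necessarily disjoint: if they shared a point, equality of sums would force the remaining two points to coincide as well, so two distinct equal-sum pairs already yield four distinct points.

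Next I would set $A_1=A_2=A_3=A_4=S$ and view the existence of a quadruple with $a_1-a_2+a_3-a_4=0$ as an instance of \xldt{$4$}{$\mathbb{R}^2$}, taking $\beta_0=0$ and $\beta_1=\beta_3=1$, $\beta_2=\beta_4=-1$ (where $1$ and $-1$ denote the all-ones and minus-all-ones vectors, so that $\beta_i a_i=\pm a_i$ under the Hadamard product); this is one copy of the equation per coordinate. Lemma~\ref{lem:ldt2sum} converts it into an instance of \xsum{$4$}{$\mathbb{R}^2$}, and Lemma~\ref{lem:vec2sc} then reduces \xsum{$4$}{$\mathbb{R}^2$} to \xsum{$4$}{$\mathbb{R}$} in randomized linear time. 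Every set operation is a single linear-time scan, so the whole pipeline is a randomized linear-time reduction, exactly as claimed.

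The step I expect to be the real obstacle is \emph{distinctness}. The instance built from four identical copies of $S$ always admits the spurious solutions $a_1=a_2,\ a_3=a_4$ for any two points, so as it stands it answers yes unconditionally; the same degeneracy already lurks in the \textsf{TRIANGLE} and \textsf{SIMILARITY MATCHING} reductions. A correct argument must therefore restrict attention to quadruples of four distinct points, equivalently (by the disjointness remark) to two distinct nondegenerate pairs of equal sum. I would enforce this by color-coding: assign each point of $S$ one of four colors and let $A_i$ collect the points of color $i$, so that the four sets are disjoint and every \xsum{$4$}{} solution automatically uses four distinct points. A fixed parallelogram then survives a random coloring with constant probability, which I would amplify by taking a constant number of independent colorings and merging them into one instance via disjoint numeric windows. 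The delicate aspects here are (i) carrying this out while keeping the reduction linear-time and reconciling it with the probability-$1$ convention of our model, and (ii), should the strictly nondegenerate version be intended, additionally discarding the collinear quadruples that satisfy the midpoint relation without spanning a genuine parallelogram.
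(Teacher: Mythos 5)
Your core argument is exactly the paper's: the paper characterizes a parallelogram (in a fixed vertex order) by $a_2-a_1=a_3-a_4$, i.e.\ $a_1-a_2+a_3-a_4=0$, casts this as \xldt{$4$}{$\mathbb{R}^2$} with $\beta=((0,0),(1,1),(-1,-1),(1,1),(-1,-1))$, and then invokes Lemma~\ref{lem:ldt2sum} followed by Lemma~\ref{lem:vec2sc}. Up to your preference for the midpoint formulation $a_1+a_3=a_2+a_4$ over the parallel-sides one, that part of your write-up matches the published proof line for line, and your observation that two distinct equal-sum pairs must be disjoint is a correct and worthwhile remark.

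Where you diverge is the distinctness discussion, and this is where your proposal, taken on its own terms, does not close. You are right that with $A_1=\cdots=A_4=S$ the tuple $(p,p,q,q)$ is a spurious solution (the paper's proof is silent on this point and simply asserts the equivalence). But the repair you sketch does not meet the reduction notion fixed in Section~\ref{sec:ldt}: a single random $4$-coloring preserves a witness parallelogram only with probability $4!/4^4=3/32$, and amplifying with a \emph{constant} number of independent colorings still leaves a constant failure probability, whereas the paper requires the output instance to be equivalent with probability $1$ (or, with random bits only, with error that can be driven arbitrarily low --- which needs $\Theta(\log(1/\varepsilon))$ repetitions, not $O(1)$). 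Moreover, the definition calls for mapping to \emph{one} instance of \xsum{$4$}{$\mathbb{R}$}, and your ``disjoint numeric windows'' merge is not justified: over the reals you must argue that no zero-sum quadruple can mix elements from different windows, which requires choosing the offsets with care (e.g.\ offsets $c_i$ for window $i$ with signs $+,-,+,-$ summing to zero within a window but such that no mixed signed combination of offsets vanishes). Finally, note that even after forcing four distinct points, the coincidence pattern $a_1=a_3$ with $a_1$ the midpoint of $a_2a_4$ shows that three-point degeneracies cannot be filtered out by any cheaper deterministic test, so some randomized device really is needed if one insists on distinctness; you have correctly located the difficulty, but you have not resolved it, and as written your reduction is either always-yes (without the coloring) or only constant-probability correct (with it).
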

\begin{proof}
Four points $a_1,a_2,a_3,a_4\in S$ in this order form a parallelogram with $a_1a_2$ parallel to $a_4a_3$ and $a_2a_3$ parallel to $a_1a_4$ if and only if $a_2-a_1 = a_3-a_4$, or equivalently if $a_1-a_2+a_3-a_4=0$.
Hence {\sf PARALLELOGRAM} reduces to \xldt{4}{$\mathbb{R}^2$} with $\beta = ((0,0),(1,1),(-1,-1),(1,1),(-1,-1))$.
From Lemmas~\ref{lem:ldt2sum} and~\ref{lem:vec2sc}, it also reduces in randomized linear time to \xsum{$4$}{$\mathbb{R}$}.
\end{proof}

The general case follows from the following observation.
Consider a matrix $Q\in\mathbb{R}^{n\times n}$, and let $Q_k$ denote the matrix obtained from $Q$ by replacing its $k$th column by the column vector $x^T$, where $x_1, x_2,\ldots ,x_n$ are variables.
Then $\det Q_k$ is a linear combination of $x_1, x_2,\ldots ,x_n$, with coefficients defined by $Q$.

\begin{lemma}
  \label{lem:aff2ksum}
{\sf AFFINE MATCHING} reduces in linear time to \ksum{$\mathbb{R}^{\ell}$} with $\ell =d(k-(d+1))$.
\end{lemma}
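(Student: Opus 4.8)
The plan is to turn the search over all affine maps into a fixed system of linear equations by exploiting affine invariance. First I would fix, once and for all, a subset of $d+1$ affinely independent points of the pattern $P$; after relabeling we may assume these are $p_1,\ldots,p_{d+1}$, and that we seek an ordered tuple $(a_1,\ldots,a_k)\in S^k$ with $a_i$ playing the role of the image of $p_i$. Finding such an affinely independent subset and relabeling is preprocessing that costs $O(1)$ time, since $d$ and $k$ are fixed. The crucial observation is that an affine map is determined by the images of $d+1$ affinely independent points: once $a_1,\ldots,a_{d+1}$ are chosen, there is a unique affine $T$ with $T(p_j)=a_j$ for $1\le j\le d+1$, and since affine maps preserve affine combinations the images of the remaining pattern points are forced. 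Concretely, writing each $p_i$ with $i\ge d+2$ as the unique affine combination $p_i=\sum_{j=1}^{d+1}\lambda_{ij}p_j$ with $\sum_j\lambda_{ij}=1$, the tuple $(a_1,\ldots,a_k)$ is an affine image of $P$ if and only if $a_i=\sum_{j=1}^{d+1}\lambda_{ij}a_j$ for every $i\ge d+2$.

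Next I would count and encode these constraints. Each vector identity $a_i-\sum_j\lambda_{ij}a_j=0$ lives in $\mathbb{R}^d$ and splits into its $d$ scalar coordinate equations $a_i^{(t)}-\sum_j\lambda_{ij}a_j^{(t)}=0$ for $t\in\{1,\ldots,d\}$; ranging over the $k-(d+1)$ indices $i$ this yields exactly $\ell=d(k-(d+1))$ homogeneous scalar equations. The coefficients $\lambda_{ij}$ depend only on $P$ and, via the determinant observation preceding the lemma, can be written as ratios of determinants of the homogeneous-coordinate matrix $[\,(p_1,1),\ldots,(p_{d+1},1)\,]$ with one column replaced by $(p_i,1)$; the denominator is nonzero by affine independence, so these are well-defined constants computable in $O(1)$ time. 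I would then pack the $\ell$ equations into a single \kldt{$\mathbb{R}^{\ell}$} instance using the Hadamard-product device from Section~\ref{sec:ldt}: index the $\ell$ coordinates by pairs $(i,t)$, map each input point $s=(s^{(1)},\ldots,s^{(d)})\in S$ to the vector $\tilde s\in\mathbb{R}^{\ell}$ whose $(i,t)$ entry is $s^{(t)}$ (so that all sets $A_1,\ldots,A_k$ are this one common encoding of $S$), and choose coefficient vectors $\beta_1,\ldots,\beta_k\in\mathbb{R}^{\ell}$ whose $(i,t)$ entry equals $1$ when $m=i$, equals $-\lambda_{im}$ when $m\in\{1,\ldots,d+1\}$, and is $0$ otherwise, with $\beta_0=0$ since the equations are homogeneous. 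A one-line check shows the $(i,t)$ entry of $\sum_m\beta_m\tilde a_m$ equals $a_i^{(t)}-\sum_j\lambda_{ij}a_j^{(t)}$, so the combined equation vanishes in $\mathbb{R}^{\ell}$ precisely when all $\ell$ scalar equations hold, i.e. when $(a_1,\ldots,a_k)$ is an affine image of $P$.

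Finally I would invoke Lemma~\ref{lem:ldt2sum} to convert this \kldt{$\mathbb{R}^{\ell}$} instance into an equivalent \ksum{$\mathbb{R}^{\ell}$} instance, absorbing each $\beta_m$ into its set. The $O(1)$ preprocessing, the $O(\ell)=O(1)$-per-point encoding, and the coefficient absorption together cost $O(n)$, so the reduction is deterministic and linear time, as claimed.

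I expect the main obstacle to be conceptual rather than computational: the heart of the argument is the affine-invariance step that collapses the apparently nonlinear search over all affine maps into the fixed linear system $a_i=\sum_j\lambda_{ij}a_j$, together with checking that the Hadamard packing couples only the matching coordinate in each equation (the index $t$ never mixes, because affine combinations act coordinatewise). One point to state carefully is that the system is also satisfied by tuples with $a_1,\ldots,a_{d+1}$ affinely dependent, which correspond to noninvertible affine maps $T$; this is harmless, since such a $T$ is still an affine transformation and the tuple still certifies $T(P)\subseteq S$, so the reduction's solution set coincides exactly with the set of affine images of $P$ contained in $S$, matching the decision problem.
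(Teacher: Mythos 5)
Your proof is correct and follows essentially the same route as the paper's: both fix $d+1$ affinely independent pattern points, observe that the images of the remaining $k-(d+1)$ pattern points are then forced, and encode this as $\ell = d(k-(d+1))$ linear equations that are fed into Lemma~\ref{lem:ldt2sum}. The only difference is presentational: you derive the linear constraints directly from preservation of barycentric (affine) combinations, whereas the paper solves for the transformation coefficients via Cramer's rule and substitutes them into the remaining equations, which yields the same linear system.
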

\begin{proof}
We use the notation $[k]\coloneqq \{1,2,\ldots ,k\}$.
Let $p_i=(p_{i,1},\ldots ,p_{i,d})$ be a row vector representing the $i$th point of $P$.
From the problem definition, $P$ must contain $d+1$ affinely independent points.
We assume, without loss of generality, that these points are the first $d+1$ points $p_1,\ldots ,p_{d+1}$.
Let $A=\{a_1,\ldots ,a_k\}\in {S\choose k}$ be a candidate match.
In order for the set $A$ to be the image of $P$ under an affine transformation, there must be a solution to the system of $k$ linear equations of the form $p_iF + t=a_i$ for all $i\in [k]$, with $d^2+d$ real unknowns $F\in \mathbb{R}^{d\times d}$ and $t\in\mathbb{R}^d$.
The system can be decomposed into $d$ systems, one for each coordinate $j\in [d]$.
Each consists of $k$ equations with $d+1$ unknowns, of the form $p_i F_j + t_j = a_{ij}$ for $i\in [k]$, where $F_j$ is the $j$th column of $F$. 
We consider one such system, for a fixed $j\in [d]$, and restrict it to the first $d+1$ equations only:
$$
Q \cdot 
\begin{pmatrix}
F_j \\
t_j
\end{pmatrix}
=
\begin{pmatrix}
a_{1,j}\\
\vdots \\
a_{d+1,j}
\end{pmatrix},
\text{\ where\ }
Q=
\begin{pmatrix}
p_1 & 1 \\
\vdots & \vdots \\
p_{d+1} & 1
\end{pmatrix} .
$$

Since the first $d+1$ points of $P$ are affinely independent, $Q$ is invertible and the system defines a unique solution for the coefficients $F_j$ and $t_j$ of the affine transformation.
From Cramer's rule, the value of the $k$th unknown is the ratio $\det Q_k / \det Q$, where $Q_k$ is the matrix 
obtained by replacing the $k$th column of $Q$ by $(a_{1,j},\ldots ,a_{d+1,j})^T$.
From the above observation and the fact that $Q$ does not depend on $S$, the expressions $\det Q_k / \det Q$ are linear combinations of the values $a_{1,j},\ldots ,a_{d+1,j}$, with coefficients determined by $P$.
Hence the explicit solution for the coefficients $F_j$ and $t_j$ are linear combinations of the $a_{1,j},\ldots ,a_{d+1,j}$.

A necessary and sufficient condition for the set $A$ to be a match is that the remaining $k-d-1$ points of $A$ are also images of the corresponding points in $P$.
Hence we require that for all $i>d+1$ the $i$th equation
$p_i F_j + t_j = a_{ij}$ is also satisfied by this solution.
The unknowns $F_j$ and $t_j$ can be replaced by linear combinations of $a_{1,j},\ldots ,a_{d+1,j}$.
Hence we obtain a set of $k-(d+1)$ linear equations on the variables $a_{1,j},\ldots ,a_{k,j}$, with coefficients depending on $P$.

Since these $k-(d+1)$ equations must hold for all coordinates $j\in [d]$ simultaneously, we obtain that 
{\sf AFFINE MATCHING} reduces to \kldt{$\mathbb{R}^{\ell}$} with $\ell =d(k-(d+1))$.
From Lemma~\ref{lem:ldt2sum} it also reduces to \ksum{$\mathbb{R}^{\ell}$}.
Since $d$ and $k$ are fixed, the reduction takes linear time.
\end{proof}

Combining with the randomization step in Lemma~\ref{lem:vec2sc}, we get the second part of Theorem~\ref{thm:main}.

\begin{theorem}
  \label{thm:mainaff}
  {\sf AFFINE MATCHING} reduces in randomized linear time to \ksum{$\mathbb{R}$}.
\end{theorem}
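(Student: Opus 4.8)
The plan is to prove \textbf{Theorem~\ref{thm:mainaff}} by chaining together the two results that immediately precede it: \textbf{Lemma~\ref{lem:aff2ksum}}, which reduces {\sf AFFINE MATCHING} in deterministic linear time to \ksum{$\mathbb{R}^{\ell}$} with $\ell = d(k-(d+1))$, and \textbf{Lemma~\ref{lem:vec2sc}}, which reduces \ksum{$\mathbb{R}^m$} to \ksum{$\mathbb{R}$} in randomized linear time for any fixed $m>0$. Since composing two reductions just concatenates the two mapping procedures, the whole argument is essentially a one-line assembly, and the work has all been done upstream.

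Concretely, I would first invoke \textbf{Lemma~\ref{lem:aff2ksum}} to transform a given instance of {\sf AFFINE MATCHING} into an equivalent instance of \ksum{$\mathbb{R}^{\ell}$} in deterministic linear time, noting that $\ell = d(k-(d+1))$ is a fixed positive integer because both $d$ and $k$ are fixed parameters of the problem (and $k\geq d+2$ guarantees $\ell\geq d>0$). Then I would apply \textbf{Lemma~\ref{lem:vec2sc}} with the dimension parameter set to $\ell$, which produces from this vector instance an equivalent scalar instance of \ksum{$\mathbb{R}$} in randomized linear time, the equivalence holding with probability~$1$ by the same argument about a random unit vector avoiding a measure-zero orthogonality event. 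Composing the two steps, the overall map runs in randomized linear time and preserves the yes/no answer with probability~$1$, which is exactly the definition of a randomized linear-time reduction given earlier.

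The only point that requires a moment's care is making sure the randomized-reduction framework composes cleanly: the reduction of \textbf{Lemma~\ref{lem:aff2ksum}} is deterministic and exact, so feeding its output into the probability-$1$ reduction of \textbf{Lemma~\ref{lem:vec2sc}} yields a combined reduction that still succeeds with probability~$1$, and the sum of two $O(n)$ running times is again $O(n)$. I do not anticipate any genuine obstacle here, since nothing new has to be constructed; the main conceptual content lives in \textbf{Lemma~\ref{lem:aff2ksum}}, whose Cramer's-rule computation is what actually encodes the affine-matching condition as a system of linear equations. This theorem simply records that, after the dimension-reduction trick, the target can be taken to be the scalar \ksum{$\mathbb{R}$} problem rather than its vector variant.
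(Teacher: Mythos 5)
Your proposal is correct and matches the paper's own proof, which likewise obtains Theorem~\ref{thm:mainaff} by composing the deterministic linear-time reduction of Lemma~\ref{lem:aff2ksum} to \ksum{$\mathbb{R}^{\ell}$} with the randomized linear-time projection step of Lemma~\ref{lem:vec2sc}. The extra observations you make (that $\ell>0$ since $k\geq d+2$, and that the composition preserves success with probability~$1$) are harmless elaborations of the same one-line argument.
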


\section{Algebraic decision tree complexity}

An {\em algebraic decision tree} is a type of nonuniform algorithm for problems on inputs composed of $n$ real numbers.
For each input size $n$, it consists of a binary tree whose internal nodes are labeled with inequalities of the form ``$q(x)\leq 0$'' on the input $x\in\mathbb{R}^n$, where $q$ is a bounded-degree $n$-variate polynomial in $x_1,x_2,\ldots ,x_n$.
Inequalities are interpreted as {\em queries} on the input, and the two subtrees correspond to the possible outcomes of the query on the input.
Leaves of the tree are labeled with the answer to the problem.
The minimum height $h(n)$ of an algebraic decision tree solving instances of size $n$ the problem is the {\em decision tree} complexity, or {\em query} complexity of the problem.
When the queries only involve linear functions, such trees are called {\em linear decision trees}.
In that case, a query is said to be {\em $t$-sparse} when it involves at most $t$ numbers of the input.

We have the following recent result on the linear decision tree complexity of the \ksum{} problem.

\begin{theorem}[Kane, Lovett, Moran~\cite{KLM19}]
  \label{thm:ksumldt}
The \ksum{} problem on $n$ elements can be solved by a linear decision tree of height $O(n\log^2 n)$ in which
all the queries are $2k$-sparse and have only $\{-1,0,1\}$ coefficients.
\end{theorem}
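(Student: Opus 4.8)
The plan is to prove this geometrically, via the inference-dimension framework underlying the Kane--Lovett--Moran result. View a \ksum{} instance as a point $x\in\mathbb{R}^n$, and to every pair of $k$-subsets $S,T\subseteq\{1,\ldots,n\}$ (with $T$ possibly empty) associate the query hyperplane $\sum_{i\in S}x_i-\sum_{j\in T}x_j=0$. Its normal vector is $2k$-sparse with entries in $\{-1,0,1\}$, and taking $T=\emptyset$ recovers the genuine \ksum{} tests $\sum_{i\in S}x_i=0$. Knowing on which side of each such hyperplane $x$ lies --- equivalently, the relative order of all $k$-subset sums together with which of them vanish --- certainly determines whether some $k$-subset sums to zero. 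So it suffices to build a linear decision tree that locates $x$ in the arrangement of these $m=\binom{n}{k}^2\le n^{2k}$ hyperplanes, all of which have the required sparsity and coefficients.

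First I would set up the general combinatorial reduction. Call a query $h$ \emph{inferred} by a set of (query, sign) pairs if every point consistent with those signs lies on one fixed, determined side of $h$. The \emph{inference dimension} $d$ is the largest number of queries that can be mutually non-inferring (none inferred from the signs of the rest) for some input. The key algorithmic lemma, proved by a Haussler--Welzl-style double-sampling argument in which $d$ plays the role of the VC dimension, is that if the inference dimension is $d$, then a uniformly random sample of $O(d\log m)$ queries, once its signs are revealed, infers the signs of \emph{all} $m$ queries with high probability. Since a good sample therefore exists deterministically, one obtains a linear decision tree in which every root-to-leaf path makes $O(d\log m)$ queries, and these queries are exactly the hyperplanes above --- hence $2k$-sparse with $\{-1,0,1\}$ coefficients.

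The main obstacle is bounding the inference dimension of the \ksum{} query system, and this is where the sparse $\pm 1$ structure must be used. A lower bound of $\Omega(n)$ is immediate: the $\lfloor n/k\rfloor$ tests built from pairwise-disjoint $k$-subsets involve disjoint variables, so no one of them is inferred by the others. The content is the matching upper bound $d=\tilde O(n)$, i.e.\ that in any family of more than $\tilde O(n)$ mutually non-inferring comparisons the feasible cone cut out by their known strict signs is already so constrained that some further comparison's sign is forced. A naive dimension count cannot work, since an arrangement of \emph{arbitrary} hyperplanes in $\mathbb{R}^n$ admits unboundedly many mutually crossing members; the argument must exploit that the normals are drawn from the structured, low-complexity family of $2k$-sparse $\{-1,0,1\}$ vectors, via a linear-algebraic/extremal estimate showing that this family supports only $\tilde O(n)$ genuinely independent sign constraints. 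I expect this inference-dimension estimate to be by far the most delicate step.

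Finally I would assemble the pieces: the sampling lemma gives path length $O(d\log m)$, and with the bound $d=\tilde O(n)$ and $\log m=O(k\log n)=O(\log n)$ (recall $k$ is fixed), the sampling overhead and the inference-dimension logarithm combine to a tree of height $O(n\log^2 n)$, whose queries are $2k$-sparse with coefficients in $\{-1,0,1\}$. The sampling reduction and the geometric encoding are comparatively routine; the crux is the $\tilde O(n)$ inference-dimension bound.
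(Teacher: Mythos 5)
First, a framing remark: the paper does not prove this statement at all --- it is imported verbatim from Kane, Lovett, and Moran~\cite{KLM19} and used as a black box --- so there is no in-paper proof to compare against, and I am judging your reconstruction against the argument in~\cite{KLM19}. Your architecture is the right one: point location among the $\binom{n}{k}^2$ comparison hyperplanes $\sum_{i\in S}x_i-\sum_{j\in T}x_j=0$, reduction of the tree depth to the inference dimension of this query class, and a random-sampling argument. But there are two genuine gaps. The smaller one is your sampling lemma: a single uniform sample of $O(d\log m)$ queries does \emph{not} infer all $m$ queries with high probability. The symmetrization argument (draw $r+1$ queries, discard a random one) shows only that the \emph{expected} number of queries not inferred from a random sample of size $r$ is on the order of $md/(r+1)$, so a one-shot sample would need size $\Omega(dm)$ to infer everything --- the Haussler--Welzl analogy breaks down because inference is relative to a single input $x$, not to a measure over inputs. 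The mechanism actually used in~\cite{KLM19} is recursive: query a sample of size $O(d)$ together with the comparisons needed to sort its values at $x$ (this sorting step is precisely where $2k$-sparsity rather than $k$-sparsity enters), infer a constant fraction of the remaining queries, and recurse, for total depth $O(d\log m)$. Your final count happens to coincide with the correct one, but the one-shot mechanism you state is false.

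The larger gap is the one you flag yourself: the $O(n\log n)$ upper bound on the inference dimension of $2k$-sparse queries with coefficients in $\{-1,0,1\}$ is the entire technical content of the theorem, and you supply no argument for it beyond the (correct) observation that sparsity and bounded coefficients must be exploited. For the record, the argument in~\cite{KLM19} runs roughly as follows: given $m$ such queries and the sorted order of their values at $x$, a pigeonhole count on subset sums (there are $2^m$ subsets of the normal vectors but only $(O(m))^n$ possible integer sums of such subsets) shows that once $m\gg n\log n$ two distinct subsets of the normals have equal sums, yielding a $\{-1,0,1\}$ linear dependency; combined with the known order of the values, this dependency forces the sign of one of the queries, so no family of more than $O(n\log n)$ queries can be mutually non-inferring. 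Without this step, and with the sampling lemma corrected, your proposal is an accurate outline of the proof strategy of~\cite{KLM19}, but not a proof.
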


We now show that this result directly applies to the {\sf SIMILARITY MATCHING} and {\sf AFFINE MATCHING} problems, thereby proving Corollary~\ref{cor:tree}.

We first consider the {\sf SIMILARITY MATCHING} problem, an instance $y$ of which consists of two coordinates per point of $P$ and $S$, hence of $2(k+n)$ real numbers.
Suppose we apply the reduction proposed in Theorem~\ref{thm:mainsim} to obtain an instance of \ksum{$\mathbb{R}$}.
Now consider the linear decision tree from Theorem~\ref{thm:ksumldt}.
Each linear query on the transformed input maps to a query on the original input numbers $y$.
Because the reduction only involves multiplications and additions on these numbers, such queries are algebraic queries on the original input $y$.
Therefore, the linear decision tree for \ksum{} maps to an algebraic decision tree of the same height for {\sf SIMILARITY MATCHING}.
The same reasoning applies to {\sf AFFINE MATCHING}.
In that case, it suffices to observe that multiplying both sides of every query by the quantity $\det Q$ for the matrix $Q$ used in the proof of Lemma~\ref{lem:aff2ksum} yields algebraic queries again.
Note that since $k$ and $d$ are constant and the linear queries in Theorem~\ref{thm:ksumldt} are sparse, the queries have bounded degree and bounded size.
This proves Corollary~\ref{cor:tree}.

Also note that if we suppose the pattern $P$ is a fixed parameter of the problem, then the two problems are solved by {\em linear} decision trees of height $O(n\log^2 n)$. It can indeed be checked that the algebraic queries do not involve multiplications between coordinates of the points of $S$, hence are linear whenever $P$ is fixed. This proves Corollary~\ref{cor:treeFP}.
It applies in particular to the {\sf PARALLELOGRAM} problem, or for finding an equilateral triangle in a point set.

\bibliographystyle{plain}
\bibliography{GeomPatternMatching}

\end{document}